\newtheorem{theorem}{Theorem}[section]
\newtheorem{lemma}[theorem]{Lemma}
\numberwithin{equation}{section}
\title{\boldmath The Arithmetic of Supersymmetric Vacua }
 \author{Antoine Bourget}
 \author{and Jan Troost}
 \affiliation{
Laboratoire de Physique Théorique de l'\'Ecole Normale Supérieure \\ 
CNRS,
PSL Research University,
Sorbonne Universit\'es,
75005 Paris, France}
\emailAdd{bourget@lpt.ens.fr}
\emailAdd{troost@lpt.ens.fr}
\abstract{We provide explicit formulas for the number of vacua of four-dimensional pure ${\cal N}=1$ super Yang-Mills
theories on a circle, with any simple gauge algebra and any choice of center and spectrum of line operators.
The formula for the $(\mathrm{SU}(N)/\mathbb{Z}_m)_n$ theory  is a key ingredient in the semi-classical calculation
of the number of massive vacua of  ${\cal N}=1^\ast$ gauge theories with gauge algebra $\mathfrak{su}(n)$,
compactified on a circle. Using arithmetic,
we express that number in an $\mathrm{SL}(2,\mathbb{Z})$ duality invariant manner. We confirm our tally of
massive vacua of the ${\cal N}=1^\ast$ theories by a count of inequivalent extrema of the exact superpotential. Furthermore, we compute a formula for a refined index that distinguishes
massive vacua according to their unbroken discrete gauge group.}
\begin{document} 
\maketitle
\flushbottom

\section{Introduction}
The calculation of the number of vacua of pure ${\cal N}=1$ super Yang-Mills theories on $\mathbb{R}^4$
is non-trivial, as witnessed by
the seventeen years it took to spell out its subtleties \cite{Witten:1982df,Witten:1997bs,Keurentjes:1998uu,Keurentjes:1999qf,Keurentjes:1999mv,Kac:1999gw,BFM,Witten:2000nv}.
The final result for a theory with simple gauge algebra $\mathfrak{g}$
is that the theory has a number of massive vacua equal to the dual Coxeter number of the gauge algebra, as predicted by chiral
symmetry breaking. Only recently it was made manifest that when we compactify the theory on a circle, further subtleties
need to be taken into account to provide the tally of massive vacua \cite{Aharony:2013hda}. In particular, the choice
of the center of the simple gauge group as well
as the spectrum of line operators in the theory influence the supersymmetric index of pure ${\cal N} =1 $ super Yang-Mills theory on a circle \cite{Aharony:2013hda}. The reasoning of
how to calculate the number of vacua in all cases was laid out in \cite{Aharony:2013hda,Aharony:2013kma}, and an explicit formula
was given for almost all cases.\footnote{See \cite{Anber:2015wha} for further analysis of the dynamics in the
non-supersymmetric setting.}

In this paper, we firstly complete the list of explicit formulas for the supersymmetric index
of the pure ${\cal N}=1$ theory on a circle. In the bulk of the paper, we calculate the number of vacua for
the $\mathrm{SU}(N)/\mathbb{Z}_m$ gauge theory, with a spectrum of line operators further specified by an integer $n$ modulo $m$
\cite{Aharony:2013hda}. 
The index is a sum over greatest common divisors. All other cases that were left aside in \cite{Aharony:2013hda} are
rendered explicit in appendix \ref{indices}.

Secondly, we compute the number of vacua of ${\cal N}=4$ supersymmetric Yang-Mills theories with gauge algebra $\mathfrak{su}(n)$, deformed by three mass terms
for the adjoint chiral multiplets to the ${\cal N}=1^\ast$ theory.  We count the 
massive vacua upon circle compactification semi-classically, by classifying gauge group breaking patterns
\cite{Donagi:1995cf,Dorey:1999sj}, including discrete gauge group factors, and using the index for the pure supersymmetric Yang-Mills theory 
on $\mathbb{R}^3 \times S^1$. The final result is an intricate combination of elementary arithmetic functions. 

Thirdly, we show that the semi-classical counting function can be rewritten, with some arithmetic effort, in terms of an $\mathrm{SL}(2,\mathbb{Z})$ invariant expression, thus proving the consistency of the semi-classical analysis with the 
duality invariance of the parent ${\cal N}=4$ theory. We confirm our census by an analysis
of inequivalent extrema of the exact elliptic superpotential \cite{Dorey:1999sj}. Finally,
we compute a refined index, that classifies massive vacua according to their unbroken discrete
gauge group, and observe that the refined index is duality invariant as well.

\section{\texorpdfstring{The Index of Pure Supersymmetric Yang-Mills Theories on a Circle}{}}
\label{pure}
Pure $\mathcal{N}=1$ theories in four dimensions with a simple gauge algebra $\mathfrak{g}$ 
permit  a choice of global gauge group
and spectrum of line operators \cite{Aharony:2013hda,Aharony:2013kma}. Our goal in this section
and appendix \ref{indices}
is to complete the list of explicit formulas for the supersymmetric index of all ${\cal N}=1$ theories,
following the path laid out in \cite{Aharony:2013hda}.

The first remaining case is the theory with (electric) gauge group $G=\mathrm{SU}(N)/\mathbb{Z}_m$, and a spectrum
of line operators with charges $(m,0) \mathbb{Z} + (n,\frac{N}{m}) \mathbb{Z}$, where both 
electric and magnetic entries are defined modulo $N$. The charges are representations of the electric times magnetic center groups $\mathbb{Z}_N \times
\mathbb{Z}_N$, and classify the possible physical behaviours of loop operators.\footnote{We refer to \cite{Aharony:2013hda} for a  pedagogical introduction of the relevant
concepts.} The resulting theory is referred to as the $(\mathrm{SU}(N)/\mathbb{Z}_m)_n$ theory. The number $m$ is a divisor of $N$,
and fixes the global choice of gauge group (via a choice of center). The integer $n$ is defined
modulo $m$ and captures a choice of complete set of  line operators. This case will play into the rest of the paper,
and we therefore treat it in the main text. All remaining cases are discussed in appendix \ref{indices}.

We calculate the number of vacua of the $(\mathrm{SU}(N)/\mathbb{Z}_m)_n$ theory upon compactification on a circle following the reasoning
described in \cite{Aharony:2013hda}. Firstly, we note that if we consider 
any pure ${\cal N}=1$, $\mathfrak{su}(n)$ theory on $\mathbb{R}^4$, then 
it has $N$ vacua. 
In the $(\mathrm{SU}(N)/\mathbb{Z}_m)_n$ theory, there is a global
symmetry group $\mathbb{Z}_{\frac{N}{m}}$ that exchanges vacua. We will label $m$ inequivalent vacua by $l=1,\dots,m$, in such a way that condensed particles in vacuum $l$ have a dyonic tilt equal to $-l$ at $\theta$-angle equal to zero.

Secondly, we consider the vacuum $l$.
It is convenient to map the dyons into purely magnetic objects 
by shifting the $\theta$-angle by $2 \pi l$.
After this transformation, the electric charge of each line operator is shifted
by $l$ times its magnetic charge, as a consequence of the Witten effect. Thus, after
this shift, we are left with a spectrum of line operator charges which is equal to
\begin{equation}
    \left(qm + p \left( n+ l \frac{N}{m} \right), p \frac{N}{m} \right)  
\end{equation}
with $q,p$ arbitrary integers, and purely magnetic condensates.
We conclude that for non-zero electric charge $qm+p(n+l \frac{N}{m})$ modulo $N$, the line operator exhibits an area
law, while if this quantity is zero (modulo $N$), the line operator satisfies a perimeter law.
A non-trivial purely magnetic line operator with a perimeter law indicates the existence of an unbroken
magnetic gauge symmetry. 

To identify the unbroken symmetry group, we determine the purely magnetic
line operator with the smallest charge. This line operator generates
the algebra of purely magnetic line operators. Its charge can be written as $\left(0 , p \frac{N}{m}\right)$ where $p$ is the smallest positive integer such that 
\begin{equation}
    \exists q \in \mathbb{Z} \, : \, qm+p\left(n+l\frac{N}{m} \right)=0 \quad \textrm{modulo }  N \, . 
\end{equation}
Since $m|N$, this implies that $p(n+l\frac{N}{m})=0$ modulo $m$. Let's introduce $r=\textrm{gcd}(m,n+l\frac{N}{m})$. We can divide the equation by 
$r$ and then infer that the solution for $p$ is $0$ modulo $\frac{m}{r}$. Thus, we finally have magnetic
line operators with charges $ (0, \frac{N}{r}) \mathbb{Z}$. 
These 't Hooft line operators have perimeter law. This shows the existence of an unbroken magnetic gauge group
$\mathbb{Z}_r \subset \mathbb{Z}_m$ in the $l$-th vacuum. 
Thus, each vacuum $l$, upon compactification on a circle, obtains a multiplicity equal to $r$. 
We conclude that the supersymmetric index $I$ counting massive vacua of the $(\mathrm{SU}(N)/\mathbb{Z}_m)_n$ pure ${\cal N}=1$ 
 theory is equal to
\begin{eqnarray}
\label{indexPure}
I^{{\cal N}=1}(N,m,n) &=& \frac{N}{m} \sum_{l=1}^{m} \textrm{gcd} \left(m,n+   \frac{lN}{m}\right) \, .
\end{eqnarray}
Thus, we have accomplished our first task of providing an explicit formula for the supersymmetric index
for all choices of $(N,m,n)$. We refer to appendix \ref{indices} for a table of results for all simple Lie algebras.

\section{\texorpdfstring{The Index of $\mathcal{N}=1^\ast$ Gauge Theories on a Circle}{}}
\label{star}
In this section, we study ${\cal N}=4$ super Yang-Mills theories deformed by three supersymmetric mass terms for the three
${\cal N}=1$ chiral multiplets in the adjoint. We wish to compute the supersymmetric index for these
${\cal N}=1^\ast$ theories on the manifold $\mathbb{R}^3 \times S^1$ and
for an $\mathfrak{su}(n)$ gauge algebra. The two main ingredients will be 
an analysis of the pattern of unbroken non-abelian gauge groups including discrete gauge group factors, and the supersymmetric index of pure ${\cal N}=1$ that we computed in section \ref{pure}.

\subsection{The Semi-Classical Calculation of the Index}
\label{semiclassical}

The number of massive vacua of the ${\cal N}=1^\ast$ theory on $\mathbb{R}^4$ with gauge algebra $\mathfrak{su}(N)$ is equal to $\sigma_1(N)$, the sum of the divisors of $N$ \cite{Donagi:1995cf}. We wish to generalize this counting function to the number of massive vacua for all ${\cal N}=1^\ast$ theories with $\mathfrak{su}(N)$ algebra compactified on a circle.
In this section, we count the vacua using semi-classical techniques. 

We briefly review the analysis of the classical vacua of the theory \cite{Donagi:1995cf,Dorey:1999sj,Bourget:2015lua}.
The vacuum expectation values of the three adjoint scalar fields are in one-to-one correspondence with the
 nilpotent orbits of $\mathfrak{su}(N)$, because they satisfy an $\mathfrak{sl}(2)$ algebra due to the   F-term equations of motions,
 and because the Jacobson-Morozov theorem 
 provides a bijection between the $\mathfrak{sl}(2)$ algebras embedded in the Lie algebra and the nilpotent orbits.
The conjugacy classes of nilpotent orbits are moreover in bijection with integer partitions of $N$. Thus, we consider 
a partition
\begin{equation}
   N = \underbrace{1+ \cdots +1}_{r_1} + \underbrace{2+ \cdots +2}_{r_2} + \cdots = \sum\limits_{i=1}^N r_i \times i \, ,
\end{equation}
where the integers $r_i$ count representations of $\mathfrak{sl}(2)$ of dimension $i$. 
\subsubsection*{The Unbroken Gauge Group}
The centralizer of the corresponding $\mathfrak{sl}(2)$-triple (which are the vacuum expectation values of the adjoint scalar
fields) in the simply connected complexified gauge group $\mathrm{SL}(n,\mathbb{C})$ is 
(see e.g. \cite{CM}):
\begin{eqnarray}
\label{centralizerSUn}
H_{\mathbb{C}} &=&   S \left( \prod\limits_{i=1}^N \left[ \mathrm{GL}(r_i) \right]^i_\Delta \right) \, . 
\end{eqnarray}
The symbol $\Delta$ is there to remind us that we diagonally embed $\mathrm{GL}(r_i)$ inside $i$ copies of the group. The symbol
$S$ indicates that the total determinant equals one.\footnote{See \cite{Bourget:2015lua} for details in the spirit of the present analysis.}
If the complexified gauge group is $G_{\mathbb{C}} = \mathrm{SL}(N,\mathbb{C})/\mathbb{Z}_m$, with $m$ a divisor of $N$, then the centralizer is obtained by dividing the group (\ref{centralizerSUn}) by $\mathbb{Z}_m$. 

We first observe that for a partition which contains two different integers, there will be an abelian gauge group factor.
The partition will only give rise to massless vacua.\footnote{It is important to note that the unbroken gauge group
does not allow for discrete Wilson lines on the circle that would break the abelian gauge group factors and leave a non-abelian unbroken gauge group after compactification. For other gauge algebras, this phenomenon does occur 
\cite{Bourget:2015cza,Bourget:2015upj}.}
Thus, to obtain massive vacua, we consider $d$ representations of dimension $\frac{N}{d}$, where $d$ divides $N$. In this configuration for the adjoint scalars,
the unbroken gauge group is 
\begin{eqnarray}
\label{residualGaugeGroup}
H_{\mathbb{C}}  = S \left(  (\mathrm{GL}(d))^{\frac{N}{d}}_\Delta \right) \cap G_{\mathbb{C}} \, . 
\end{eqnarray} 
In the low-energy limit, our initial $\mathfrak{su}(N)$, ${\cal N}=1^\ast$ theory becomes a pure $\mathcal{N}=1$ theory with  unbroken gauge group equal to the compact form of the group (\ref{residualGaugeGroup}). We wish to  understand a few properties of the unbroken group (\ref{residualGaugeGroup}).

The non-abelian gauge algebra will be $\mathfrak{sl}(d)$. The covering group $\mathrm{SL}(d)$
has a $\mathbb{Z}_d$ center. We wish to analyze the influence of modding out the
$\mathrm{SU}(N)$ group by $\mathbb{Z}_m$ on the center of this non-abelian group. To do this division, we work in the subgroup
$\mathbb{Z}_{\textrm{lcm}(m,d)} \subset \mathbb{Z}_N$. After division by $\mathbb{Z}_m$, we are left with
a $\mathbb{Z}_{\textrm{lcm}(m,d)/m}$ center for the group generated by $\mathfrak{sl}(d)$. 
In other words, the center of the non-abelian
gauge group will be
$\mathbb{Z}_{d/\textrm{gcd}(m,d)}$. Furthermore, the original center of the group was $\mathbb{Z}_{N/m}$. We therefore
have another $N / \textrm{lcm}(d,m)$ inequivalent diagonal matrices in the gauge group.

A more detailed analysis of the unbroken gauge group is given in appendix \ref{isomorphism}, where we prove that $\mathrm{SL}(d)/\mathbb{Z}_{ \textrm{gcd}(d,m)}$ is a subgroup of $H_{\mathbb{C}}$ and that
\begin{equation}
\label{structureGaugeGroup}
\frac{H_{\mathbb{C}}}{\left( \frac{\mathrm{SL}(d)}{\mathbb{Z}_{ \textrm{gcd}(d,m)}} \right)} \cong \mathbb{Z}_{\frac{N}{\textrm{lcm}(d,m)}} \, . 
\end{equation}

\subsubsection*{The Index}
We have argued that the effective non-abelian dynamics in a vacuum characterized by adjoint vacuum expectation values determined by the integer $d$
is akin to that of a pure ${\cal N}=1$ super Yang-Mills theory with gauge group $\mathrm{SU}(d)/\mathbb{Z}_{\textrm{gcd}(m,d)}$, entangled with a further set of discrete diagonal gauge group elements. The fact that the gauge group has a non-trivial component group
renders a more explicit description challenging.
We further intuit  that the 
number $n$ determining the dyonic tilt of the spectrum of line operators is inherited by the
representative $\mathrm{SU}(d)/\mathbb{Z}_{\textrm{gcd}(m,d)}$ pure
${\cal N}=1$ theory.

We then count the total number of vacua for the ${\cal N}=1^\ast$ theory as follows.
For each divisor $d$, the ${N / \textrm{lcm}(d,m)}$ discrete group elements 
in (\ref{structureGaugeGroup})
allow us to turn on an equal number of electric Wilson lines that provide a
multiplicity to the index of a
pure $(\mathrm{SU}(d)/\mathbb{Z}_{\textrm{gcd}(m,d)})_n$ theory.
Thus, we propose to multiply the multiplicity by the index of the pure $\mathcal{N}=1$ theory (\ref{indexPure}) to compose
the index for the ${\cal N}=1^\ast$ $(\mathrm{SU}(N)/\mathbb{Z}_m)_n$ theory: 
\begin{eqnarray}
I^{{\cal N}=1^\ast}(N,m,n) 
&=& \sum_{d|N} \frac{N}{\textrm{lcm}(m,d)} I^{{\cal N}=1}(d,\textrm{gcd}(m,d),n)
\\
                          &=& \sum_{d|N} \frac{N}{\textrm{lcm}(m,d)} \times \frac{d}{\textrm{gcd}(m,d)} \sum_{l=1}^{\textrm{gcd}(m,d)}
                          \textrm{gcd}(\textrm{gcd}(m,d),n+ \frac{l d}{\textrm{gcd}(m,d)}) \, ,  \nonumber 
\end{eqnarray}
and we finally find
\begin{equation}
\label{index1star}
\boxed{ I^{{\cal N}=1^\ast}(N,m,n) = \frac{N}{m} \sum\limits_{d | N} \sum\limits_{l=1}^{\textrm{gcd} (d,m)} \textrm{gcd} \left( \textrm{gcd} (d,m) , n + \frac{l d}{\textrm{gcd} (d,m)} \right)   } \, .
\end{equation}
For pedagogical purposes, we list two special cases:
\begin{itemize}
\item For the $\mathrm{SU}(N)$ theory, 
we find the index 
\begin{equation}
\label{indexSUN}
    I^{{\cal N}=1^\ast}(N,1,0) = N \sum\limits_{d | N}  1 = N \sigma_0 (N) \, .   
\end{equation}
\item
For the $\mathrm{SU}(N)/\mathbb{Z}_N$ theories, the number of massive vacua on the circle is
\begin{equation}
\label{indexSUNZm}
    I^{{\cal N}=1^\ast}(N,N,n) = \sum\limits_{d | N} \sum\limits_{l=1}^{d} \textrm{gcd} \left(d  , l\right)    \, .
\end{equation}
\end{itemize}
In subsection \ref{duality}, we will study the duality invariance of the general index formula (\ref{index1star}),
inherited from the parent ${\cal N}=4$ theory. We have just presented two special cases (\ref{indexSUN}) and (\ref{indexSUNZm}) of the counting function
that are predicted to be the same by $S$-duality. The reader can choose to take a break in appendix \ref{AppendixBasicSDuality} where we prove the equality of these two indices. Inspired by these preliminaries, we unpack in the next subsection the surprises that the boxed arithmetic index (\ref{index1star}) has in store.

\subsection{The Arithmetic Heart}
\label{genericproof}
In this subsection, we regroup the contributions to the index (\ref{index1star}) by the divisors $d$ of $N$. In order to perform
the regrouping, we need various arithmetic identities that we will first prove using elementary number theory.

Firstly, we recall three classical identities involving the M\"obius function $\mu$ and the Euler totient function $\varphi$:
\begin{equation}
\label{classicalIdentities}
    \sum\limits_{d|n} \mu (d) = \delta [n=1] \, , 
    \qquad 
    \sum\limits_{d | n} \varphi (d) =  n
    \, ,
\qquad
    \sum\limits_{d|n} d \mu \left( \frac{n}{d}\right) = \varphi (n)  \, . 
\end{equation}
The function $\delta[n=1]$ is the unit function, which is unity when $n=1$ and zero on any integer larger than one. It is the identity element of Dirichlet convolution.\footnote{The Dirichlet convolution of two functions $f,g$ defined on the positive integers is given by the formula $(f \ast g)(n)=\sum_{d|n} f(d) g(n/d)$.} 
The first identity can be rephrased as the fact that the M\"obius function in Dirichlet convolution with the
constant function equal to one is the unit function, $\mu * 1 = \delta$. The second equality, proven in formula (\ref{sumGauss}), expresses the fact that
the Euler totient function convoluted with the constant function equals the identity function, $\varphi * 1 = \mathrm{id}$. The third identity, which can be written $\mathrm{id} * \mu = \varphi$, is a consequence of the other two.\footnote{Many elementary textbooks in arithmetic
contain these facts. The last equation for instance dates back to Gauss.}

Secondly, we will need the more advanced formula \cite{CS}
\begin{equation}
\label{sumGCDaffine}
    \sum\limits_{s=1}^n \textrm{gcd} \left( n , as + b \right) = n \sum\limits_{d | n} \frac{\varphi (d)}{d} \textrm{gcd} (a,d) \delta \left[ \textrm{gcd} (a,d) | b \right]  \, , 
\end{equation}
whose proof is recalled in appendix \ref{AppendixProofGCD}. The $\delta$-function again evaluates to one if the condition inside the bracket is satisfied, and to zero otherwise.

Thus, we are armed with the necessary tools to regroup the index for the ${\cal N}=1^\ast$ theory. We start from formula (\ref{index1star}), and recount:
\begin{eqnarray}
I^{{\cal N}=1^\ast}(N,m,n) &=& \frac{N}{m} \sum\limits_{d | N} \sum\limits_{k=1}^{\textrm{gcd} (d,m)} \textrm{gcd} \left( \textrm{gcd} (d,m) , n + \frac{k d}{\textrm{gcd} (d,m)} \right) \nonumber \\ 
&=& \frac{N}{m} \sum\limits_{d | N} \sum\limits_{e | \textrm{gcd} (d,m) } \frac{\varphi (e)}{e} \textrm{gcd} (d,m) \textrm{gcd} \left( \frac{d}{\textrm{gcd} (d,m)} , e \right) \delta \left[ \textrm{gcd} \left( \frac{d}{\textrm{gcd} (d,m)} , e \right) \lvert n \right] \nonumber \\
&=& \frac{N}{m} \sum\limits_{d | N} \sum\limits_{e | \textrm{gcd} (d,m) } \frac{\varphi (e)}{e} \textrm{gcd} (d,me)  \delta \left[ \textrm{gcd} \left( \frac{d}{\textrm{gcd} (d,m)} , e \right) \lvert n \right] \nonumber \\
&=& \frac{N}{m} \sum\limits_{e |m} \sum\limits_{e |d|N } \frac{\varphi (e)}{e} \textrm{gcd} (d,me)  \delta \left[ \textrm{gcd} \left( \frac{d}{\textrm{gcd} (d,m)} , e \right) \lvert n \right] \nonumber \\
&=& \frac{N}{m} \sum\limits_{e |m} \sum\limits_{d| \frac{N}{e} } \varphi (e) \textrm{gcd} (d,m)  \delta \left[ \textrm{gcd} \left( \frac{de}{\textrm{gcd} (de,m)} , e \right) \lvert n \right] \, . 
\end{eqnarray}
We now simplify the condition in the $\delta$-function. For any divisor $e$ of $m$ and any divisor $d$ of $\frac{N}{e}$, we have
\begin{eqnarray}
\textrm{gcd} \left( \frac{de}{\textrm{gcd} (de,m)} , e \right) | n &\Leftrightarrow& e \times  \textrm{gcd} (d,m)  | n \times \textrm{gcd} (de,m) \nonumber \\
&\Leftrightarrow& \frac{\textrm{gcd} (d,m)}{\textrm{gcd} (d, \frac{m}{e} ) } | n \nonumber \\
&\Leftrightarrow& \textrm{gcd} (d,m) | dn \quad \textrm{and} \quad \textrm{gcd} (d,m) | \frac{mn}{e} \nonumber \\
&\Leftrightarrow& e | \frac{mn}{\textrm{gcd} (d,m) } \, . 
\end{eqnarray}
We go back to the computation, and use the third identity in (\ref{classicalIdentities}): 
\begin{eqnarray}
I^{{\cal N}=1^\ast}(N,m,n) &=& \frac{N}{m} \sum\limits_{e |m} \sum\limits_{d| \frac{N}{e} } \varphi (e) \textrm{gcd} (d,m)  \delta \left[ e | \frac{mn}{\textrm{gcd} (d,m) } \right] \nonumber \\
&=& \frac{N}{m} \sum\limits_{d|N} \textrm{gcd} (d,m)  \sum\limits_{e | \textrm{gcd} (m  , \frac{N}{d} , \frac{mn}{\textrm{gcd} (d,m)}) } \varphi (e) \nonumber \\
&=& \frac{N}{m} \sum\limits_{d|N} \textrm{gcd} (d,m)  \textrm{gcd} \left(m  , \frac{N}{d} , \frac{mn}{\textrm{gcd} (d,m)} \right) \, , 
\end{eqnarray}
and the last line simplifies to our final formula:
\begin{equation}
\label{index1stargrouped}
\boxed{
    I^{{\cal N}=1^\ast}(N,m,n) = N \sum\limits_{d|N}  \textrm{gcd} \left(d,m  , \frac{N}{d} , \frac{N}{m} , n \right) } \, . 
\end{equation}
The original index formula (\ref{index1star}) provides information on how each semi-classical
pure ${\cal N}=1$ vacuum obtains a degeneracy upon compactification. The formula (\ref{index1stargrouped})
regroups all contributions by the gauge group symmetry breaking pattern only.

\subsection{Duality Invariance}
\label{duality}
With the formula (\ref{index1stargrouped}) in hand, we  check the duality invariance of the index of ${\cal N}=1^\ast$ theory
compactified on a circle. The parent ${\cal N}=4$ supersymmetric Yang-Mills theory with $\mathfrak{su}(N)$ gauge algebra takes on 
$\sigma_1(N)= \sum_{d|N} d$ different guises, depending on the choice of center and the dyonic tilt in the 
spectrum of line operators. 
The duality group that leaves invariant a given theory is $\Gamma_0(N)$, while the $\mathrm{SL}(2,\mathbb{Z})$ action
on the coupling maps a theory to an equivalent one, with another choice of global gauge group and spectrum
of line operators \cite{Aharony:2013hda}.
The ${\cal N}=1^\ast$ theory inherits these duality properties from its ${\cal N}=4$ parent. (See e.g.
\cite{Aharony:2000nt} for a detailed discussion.)

We recall the duality transformation rules from \cite{Aharony:2013hda}.
The $T$-duality maps the following theories into each other:
\begin{equation}
T :    \left( \frac{\mathrm{SU}(N)}{\mathbb{Z}_m} \right)_n \longrightarrow \left( \frac{\mathrm{SU}(N)}{\mathbb{Z}_m} \right)_{n + \frac{N}{m}}
\, . 
\end{equation}
In formula (\ref{index1star}), the shift $n \rightarrow n + \frac{N}{m}$ can be offset by a shift $l \rightarrow l - \frac{N}{\textrm{lcm} (d,m)}$. It is natural to find $T$-duality realized in the semi-classical counting because $T$-duality reshuffles the sum over pure ${\cal N}=1$ vacua in a straightforward manner. This proves that the index (\ref{index1star}) is $T$-invariant. In the guise (\ref{index1stargrouped}), with vacua grouped by divisors of $N$ only, $T$-duality is also manifest. 

On the other hand, only in the grouped formula (\ref{index1stargrouped}) is $S$-duality manifest. 
Let us first note that when $n=0$, $S$-duality boils down to the exchange $m \leftrightarrow \frac{N}{m}$, which is a symmetry of (\ref{index1stargrouped}).
More generally, $S$-duality can be seen as the symmetry exchanging $m$ and $\textrm{gcd}(m,n)$. Concretely, if two theories
\begin{equation}
   S :  \left( \frac{\mathrm{SU}(N)}{\mathbb{Z}_m} \right)_n \longleftrightarrow \left( \frac{\mathrm{SU}(N)}{\mathbb{Z}_{m'}} \right)_{n'}
\end{equation}
are $S$-dual, then we have 
\begin{equation}
m' = \frac{N}{\textrm{gcd} (m,n)} \, , \qquad \quad
\textrm{gcd} (m',n') = \frac{N}{m} \, . 
\end{equation}
Armed with this knowledge, we can write 
\begin{equation}
    I^{{\cal N}=1^\ast}(N,m,n) = N \sum\limits_{d|N}  \textrm{gcd} \left(d, \frac{N}{d} , \frac{N}{m} , \textrm{gcd} (m,n) \right)  \, , 
\end{equation}
which renders $S$-duality evident.

\subsection{The Exact Superpotential Approach}

In this subsection, we reproduce the index formula from  a description of 
the full quantum dynamics of the theory compactified on a circle. A third guise of the supersymmetric
index will be found. Our tool of choice is the effective
elliptic superpotential $\mathcal{W}$ derived in \cite{Dorey:1999sj}, which is a function of the Wilson line
and dual photon variables in the Cartan subalgebra, that are the natural variables on the Coulomb branch
upon compactification.
As discussed in detail in \cite{Bourget:2015cza,Bourget:2015upj}, the resulting elliptic Calogero-Moser
integrable potential has a periodicity equal to the fundamental weight
lattice both in the electric (Wilson line) direction (parametrized by real multiples of the complex number $\tau$ say), as well as in the magnetic (dual photon) direction (parametrized by real numbers). In the gauge theory, however, we have a larger
set of gauge inequivalent variables, and we must consider a $N$-fold cover of the fundamental torus of the integrable system, corresponding to identifications of the Cartan variables under shifts by the root lattice. Let's normalize the integrable system variables such that the fundamental torus has periodicities $1$ and $\tau$
in the dual photon and Wilson line directions respectively.
We moreover denote multiples of the fundamental torus periods by the pairs $(x,y) \in \mathbb{Z} \times \mathbb{Z}$.
From the spectrum of line operators in the theory $(\mathrm{SU}(N)/\mathbb{Z}_m)_n$, it then follows that gauge transformations
identify each individual Cartan variable under shifts in $(m,0) \mathbb{Z} + (-n,\frac{N}{m}) \mathbb{Z}$.\footnote{To prove this, it is sufficient to consider all gauge invariant Wilson-'t Hooft line operators, and perform constant shifts on the vacuum expectation values of the electric and magnetic gauge fields.}

In a second stage, we recall that all the massive vacua, when projected onto the fundamental torus
are described by index $N$ lattice extensions of the fundamental torus \cite{Donagi:1995cf,Dorey:1999sj}. These lattices are coded in
an integer $d$ and a tilt $t$, and give rise to $N$ points on the fundamental torus, which are all distinct
multiples of $(d/N,0)$ and $(t/N,1/d)$. Because we are dealing with the $\mathfrak{su}(n)$ gauge algebra,
the $N$ points of the integrable system are only determined up to an overall arbitrary shift.

We must count the solutions of the gauge theory exact superpotential (with the gauge theory notion of inequivalence),
which project onto these lattices in the fundamental torus. We illustrate the above notions concretely in the example of the gauge algebra $\mathfrak{su}(2)$, before counting inequivalent solutions in the general case.

\subsubsection*{\texorpdfstring{An Example : The $\mathfrak{su}(2)$ Gauge Algebra}{}}

For the $\mathrm{SU}(2)$ gauge group, we have one Wilson line proportional to $\tau$, and 
periodic with period $2 \tau$, arising from identifications of the Wilson line up to shifts in the root lattice. 
The dual photon has a period which is equal to the weight lattice. The complexified coordinate $Z$ in the Cartan subalgebra
thus lives in a
double cover of the torus, with periods $(1,0)$ and $(0,2 \tau)$. We can count the vacua semi-classically following the analysis
in subsection \ref{semiclassical}. We have two nilpotent orbits, corresponding to the partitions $1+1$ and $2$. The first
choice gives rise to two confining vacua of pure ${\cal N}=1$ $\mathrm{SU}(2)$, while the second yields two Higgsed vacua,
with differing discrete Wilson lines, due to an unbroken $\mathbb{Z}_2$ discrete gauge group.
The non-zero discrete Wilson line shifts the coordinate $Z$ by $\tau$.
The confining vacuum has Wilson lines which are fixed by the exact effective superpotential.
If one attempts to add a Wilson line to a confining vacuum, it is again gauge equivalent to a confining
vacuum. This describes the situation for the $\mathrm{SU}(2)$ gauge group. The vacua are illustrated in the second line of
Figure \ref{figuresu2} and Table \ref{tablesu2}.

For the $\mathrm{SO}(3)_+$ theory, we have periodicities $(2,0)$ and $(0,\tau)$ and $3+1$ semi-classical vacua, and similarly
for the $\mathrm{SO}(3)_-$ case, where the periodicities are $(2,0)$ and $(0,\tau+1)$. The magnetic $\mathbb{Z}_2$ gauge group arising in the
pure ${\cal N}=1$ $\mathrm{SO}(3)_\pm$ theories permits
a magnetic  Wilson line leading to a third non-gauge-equivalent vacuum, while the Higgs vacuum in these theories is unique.
For these two theories as well, we draw the relevant double covers, and exhibit the four inequivalent vacua
on each of these tori in Figure \ref{figuresu2}. We also summarize the data in Table \ref{tablesu2}.
These three theories are mapped into each other under $\mathrm{SL}(2,\mathbb{Z})/\Gamma_0(2)$, and in this way we verify the 
duality invariance of the index, equal to four in all theories.
\begin{table}[H]
\begin{center}
\begin{tabular}{|c|c|c|c|}
\hline
Theory & Partition $2=1+1$ & Partition $2=2$ & Total \\ \hline
$\mathrm{SU}(2)$ & $\mathrm{SU}(2) \rightarrow 1+1$ vacua & $\mathbb{Z}_2 \rightarrow 2$ vacua & 4 \\ \hline
$\mathrm{SO}(3)_+$ & $\mathrm{SO}(3)_+ \rightarrow 2+1$ vacua & $1 \rightarrow 1$ vacuum & 4 \\ \hline
$\mathrm{SO}(3)_-$ & $\mathrm{SO}(3)_- \rightarrow 1+2$ vacua & $1 \rightarrow 1$ vacuum & 4 \\ \hline
\end{tabular}
\end{center}
\caption{An example of ${\cal N}=1^\ast$ $\mathfrak{su}(2)$ vacua on the circle}
\label{tablesu2}
\end{table}

\begin{figure}
    \centering
    \includegraphics[width=.8\textwidth]{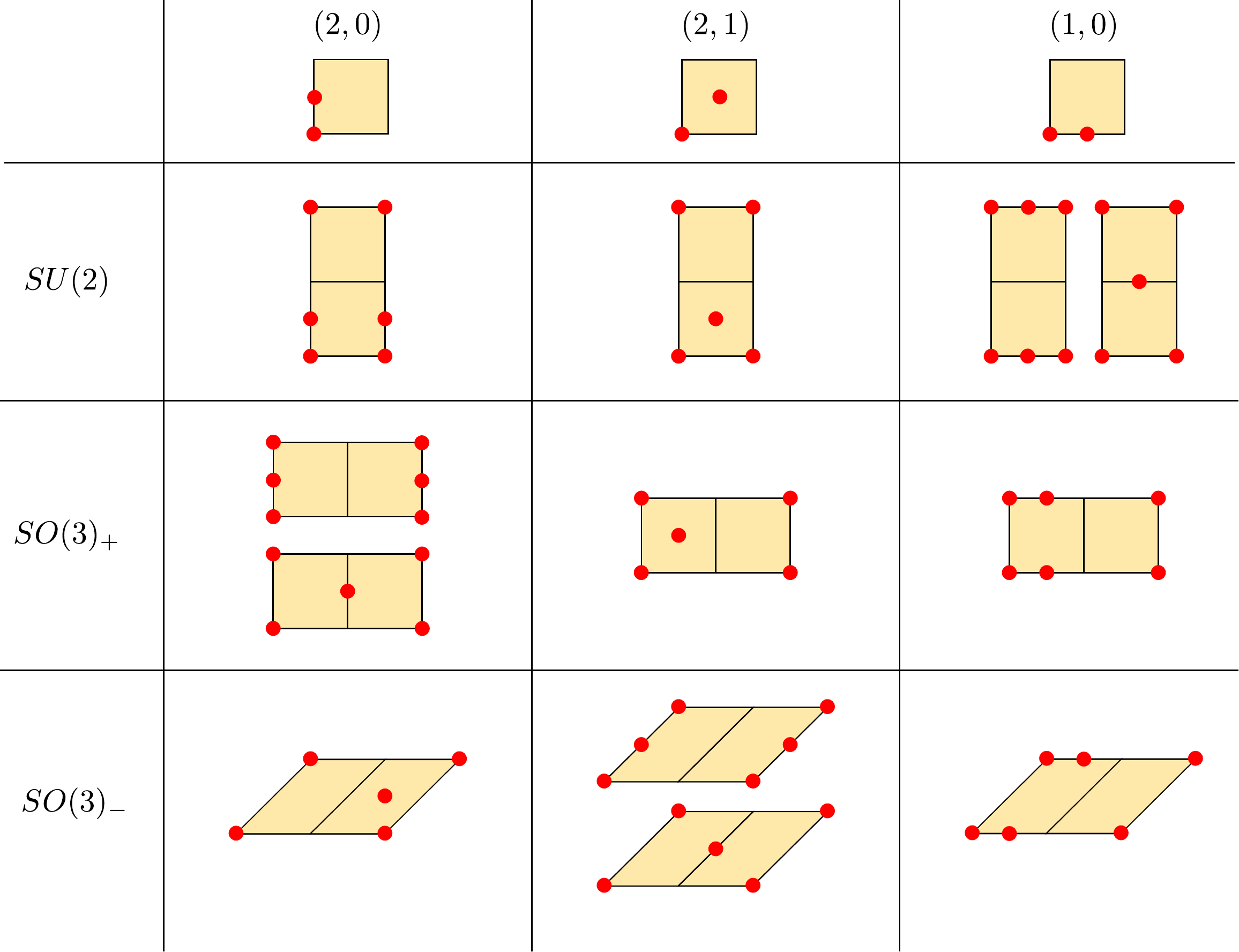}
    \caption{We draw the double cover of the torus on which the complexified gauge theory Wilson line lives.
    The double cover depends on the theory. In the resulting torus, we draw the inequivalent extrema of the 
    effective superpotential. The first two columns correspond to the confining theory, partition $1+1$, 
    and the third column corresponds to the Higgs vacua, partition $2$. The top line provides the projection
    of the extrema onto the fundamental torus. 
    }
    \label{figuresu2}
\end{figure}

\subsubsection*{The Analysis for $\mathfrak{su}(n)$}
Let's finally count the number of inequivalent massive extrema in the case of the $\mathfrak{su}(n)$ theories characterized
by the integers $(N,m,n)$.
Consider a configuration $(d,t)$ of the integrable system (i.e. projected on the fundamental torus), and lift it to the gauge theory torus parametrized by $(N,m,n)$. We want to enumerate the inequivalent configurations, under the following equivalence relation. For any $u \in \mathbb{Z}^2$, the two configurations $(X_1 , \dots , X_N)$ and $(X_1 , \dots , X_i + u , \dots , X_j -u , \dots , X_N)$ are equivalent due to the identification of gauge theory variables under
root lattice shifts. 
Furthermore, the ordering of the elements in the $N$-plet $(X_1 , \dots , X_N)$ is irrelevant because of the action of the Weyl group. 
It is clear that any configuration is equivalent to a reduced one where $X_i \in [0,1[^2$ for $1 \leq i \leq N-1$. Moreover, given a configuration characterized by the numbers $(d,t)$, one can show that 
\begin{equation}
    (X_1 , \dots , X_{N-1} , X_N) \sim (X_1 , \dots , X_{N-1} , X_N + u_1) \sim (X_1 , \dots , X_{N-1} , X_N+u_2)
\end{equation}
where $u_1 = (d,0)$ and $u_2 = (t , \frac{N}{d}) $. Because of the invariance of all line operators under
gauge transformations, we also have 
\begin{equation}
    (X_1 , \dots , X_{N-1} , X_N) \sim (X_1 , \dots , X_{N-1} , X_N + u_3) \sim (X_1 , \dots , X_{N-1} , X_N+u_4)
\end{equation}
for $u_3=(m,0)$ and $u_4 = (-n , \frac{N}{m})$. 
Hence the number of non-equivalent configurations is the index of the lattice $\mathbb{Z}^2$ (thought of
as corresponding to the last entry $X_N$ in the extremum in its reduced form) in the lattice $\bigoplus\limits_{\alpha=1}^4 u_\alpha \mathbb{Z}$. This index is given by the greatest common divisor of the determinants of all pairs of vectors $(u_\alpha , u_\beta)$ for $1 \leq \alpha < \beta \leq 4$.\footnote{\label{footnote}Indeed, 
consider a rank $n$ family $(u_\alpha)_{\alpha = 1 , \dots , k}$ of vectors in $\mathbb{Z}^n$, with $k \geq n$. Let $L$ be the sublattice of $\mathbb{Z}^n$ generated by this family. The Smith normal form of the matrix formed by the list of vectors is $\textrm{diag} (e_1 , \dots , e_{n})$ where $e_1 \times \dots \times e_i $ is the greatest common divisor of all $i \times i$ minors of the matrix, for $i \leq n$. We then have the equivalence
\begin{equation}
    \frac{\mathbb{Z}^n}{L} \cong \bigoplus\limits_{\alpha=1}^{n}  \frac{\mathbb{Z}}{e_i\mathbb{Z}} \, .
\end{equation}
The index of $\mathbb{Z}^n$ in this lattice is $\prod\limits_{i=1}^{n} e_i$, which is equal to the determinant of all $n \times n$ minors. This result also provides a basis of generators.} 
The result is 
\begin{eqnarray}
   \left[ \mathbb{Z}^2 : \bigoplus\limits_{\alpha=1}^4 u_\alpha \mathbb{Z} \right] &=& \textrm{gcd} \left( N , 0 ,  N \frac{d}{m} , N \frac{m}{d} , N \left( \frac{t}{m} + \frac{n}{d} \right) , N \right) \nonumber \\
   &=& \textrm{gcd} \left( N \frac{d}{m} , N \frac{m}{d} , N \left( \frac{t}{m} + \frac{n}{d} \right) \right) \, . 
\end{eqnarray}
We used the fact that a divisor of $N \frac{d}{m}$ and $N \frac{m}{d}$ is necessarily a divisor of $N$. Finally, we obtain a third formula for the number of vacua, based on the count of inequivalent solutions to the effective superpotential: 
\begin{equation}
\label{CountingLattices}
\boxed{
    I^{{\cal N}=1^\ast}(N,m,n) = \sum\limits_{d|N}  \sum\limits_{t=0}^{d-1} \textrm{gcd} \left( N \frac{d}{m} , N \frac{m}{d} , N \left( \frac{t}{m} + \frac{n}{d} \right) \right)  \, . } 
\end{equation}
This gives the multiplicity assigned to all sublattices of index $N$, which are labelled by $(d,t)$. In other words, if we draw Figure \ref{figuresu2} for algebra $\mathfrak{su}(N)$, in the box corresponding to column $(d,t)$ and line $\mathrm{SU}(N)/\mathbb{Z}_m$, there will be $\textrm{gcd} \left( N \frac{d}{m} , N \frac{m}{d} , N \left( \frac{t}{m} + \frac{n}{d} \right) \right)$ inequivalent lattices. 

\subsubsection*{Equivalence of the formulas}
No matter how we slice the index, the total tally should remain the same. Thus,
we must show that the index (\ref{CountingLattices}) obtained through the exact superpotential approach is identical to the count (\ref{index1star}) in the semi-classical analysis. This represents a non-trivial consistency check of the validity of our formulas. 
The proof involves more arithmetical lemmas that we relegate to appendix \ref{AppendixLemmas}. We first use the fact that 
 \begin{equation}
     \textrm{gcd} \left( N \frac{d}{m} , N \frac{m}{d} , N \left( \frac{t}{m} + \frac{n}{d} \right) \right) = \frac{N}{\textrm{lcm} (m,d)}    \textrm{gcd} \left( d,m,  \frac{td + nm}{\textrm{gcd} (d,m)}\right)
 \end{equation}
 to rewrite (\ref{CountingLattices}) as 
 \begin{equation}
    I^{{\cal N}=1^\ast}(N,m,n) = \frac{N}{m} \sum\limits_{d|N}  \sum\limits_{t=1}^{\textrm{gcd} (d,m)}   \textrm{gcd} \left( d,m,  \frac{td + nm}{\textrm{gcd} (d,m)}\right)\, . 
\end{equation}
We introduce the shorthand $\delta =\textrm{gcd} (d,m) $, $d' = \frac{d}{\delta}$ and $m' = \frac{m}{ \delta}$, so that $\textrm{gcd} (d',m')=1$. We moreover define the cyclically ordered sequence
$D_a(\delta ,d') = \left[ \mathrm{gcd}(\delta ,td'+ a) \right]_{ t \in \mathbb{Z}_{\delta } }$. 
The index (\ref{CountingLattices}) becomes 
 \begin{equation}
    I^{{\cal N}=1^\ast}(N,m,n) = \frac{N}{m} \sum\limits_{d|N}  \sum\limits_{t=1}^{\delta}   \textrm{gcd} \left( \delta ,  td' + nm' \right) 
    = \frac{N}{m} \sum\limits_{d|N}  \sum\limits_{x \in D_{nm'}(\delta , d')}x
    \, .  
\end{equation}
We must now refer to  Lemma \ref{lemma2} that shows that $(\delta , nm')$, which manifestly belongs to the sequence $D_{nm'}(\delta , d')$, is also an element of $D_{n}(\delta , d')$. From Lemma \ref{lemma3} we then deduce that these two sequences are equal. We conclude that
 \begin{equation}
    I^{{\cal N}=1^\ast}(N,m,n) = \frac{N}{m} \sum\limits_{d|N}\sum\limits_{x \in D_{n}(\delta , d')}x = \frac{N}{m} \sum\limits_{d|N}  \sum\limits_{t=1}^{\delta}   \textrm{gcd} \left( \delta ,  td' + n \right) 
    \,  , 
\end{equation}
which is exactly equation (\ref{index1star}). The arithmetic sticks.

\subsubsection*{The Refined Index}
Finally, we derive a refinement of the index $I^{{\cal N}=1^\ast}(N,m,n)$ in which we keep track of the unbroken discrete gauge group in each vacuum.\footnote{We are grateful to the anonymous JHEP referee for the suggestion to refine our index.} We perform the refinement in terms of the lattice enumeration of massive
vacua. Using the notations of footnote \ref{footnote}, we propose that the discrete gauge group in the vacua corresponding  to lattices labelled by $(d,t)$ is 
\begin{equation}
\label{discretegaugegroup}
   G^{\textrm{discrete}} = \mathbb{Z}_{e_2} \times \mathbb{Z}_{e_1} \, , 
\end{equation}
where 
\begin{eqnarray}
     e_1 &=& \textrm{gcd}\left(d,m,\frac{N}{d},\frac{N}{m},t,n \right) \\
     e_1 e_2 &=& \textrm{gcd}\left(N\frac{m}{d},N\frac{d}{m},N\left(\frac{t}{m}+\frac{n}{d} \right)\right) \nonumber \, . 
\end{eqnarray}
The direct product in the group $(\ref{discretegaugegroup})$ arises from the Smith normal form, which gives
a product group structure to the class of vacua under study.
The identification of the gauge group is done under the assumption that the full degeneracy of
the vacua corresponds to turning on Wilson lines in the unbroken
discrete gauge group.
We introduce two fugacities $x$ and $y$ for the two factors of $G^{\textrm{discrete}}$, so that the coefficient of $x^{e_1} y^{e_2}$ in the refined index $I^{{\cal N}=1^\ast}(N,m,n,x,y)$ is the number of vacua with discrete gauge group (\ref{discretegaugegroup}). The final formula reads
\begin{eqnarray}
     I^{{\cal N}=1^\ast}(N,m,n,x,y) &=& \sum\limits_{d | N} \sum\limits_{t=0}^{d-1} e_1 e_2 x^{e_1} y^{e_2} \, . 
\label{latticerefined}
\end{eqnarray}
An alternative expression for the refined index with a semi-classical flavor can be obtained from formula
(\ref{latticerefined}) following the same path that we used in the previous paragraph, and reads
\begin{equation}
\begin{split}
    I^{{\cal N}=1^\ast}(N,m,n,x,y) = \frac{N}{\textrm{lcm}(m,d)} \sum_{d|N} \sum_{l=1}^{d} & \textrm{gcd}\left( d,m,n+ \frac{ld}{\textrm{gcd}(d,m)} \right) \\ 
 & x^{\textrm{gcd}(d,l,N/d,m,n,N/m)} y^{\frac{N}{\textrm{lcm}(m,d)} \frac{\textrm{gcd}(d,m,n+ld/\textrm{gcd}(d,m))}{\textrm{gcd}(d,l,N/d,m,n,N/m)}} \, .
\end{split}
\end{equation}
We claim that the refined index is also duality invariant.

\section{Conclusions}
\label{conclusions}
We completed the table of indices of pure ${\cal N}=1$ Yang-Mills theories on a circle following \cite{Aharony:2013hda}. Using the result for the
$\mathfrak{su}(n)$ algebra,
we tallied the number of massive vacua of ${\cal N}=1^\ast$ gauge theories with $A$-type gauge algebra compactified on a circle, and demonstrated consistency with the duality of the parent ${\cal N}=4$ theory.
We refined the index by keeping track of the unbroken discrete gauge group.
It would be interesting to compute  the supersymmetric index of ${\cal N}=1^\ast$ theories
for all classical gauge algebras, for any choice of gauge
group and spectrum of line operators, and any pick of twisted boundary conditions \cite{Hanany:2001iy,Kim:2004xx}. The main ingredients will again be the supersymmetric index of pure
${\cal N}=1$ \cite{Aharony:2013hda}, the analysis of semi-classical solutions for the adjoint scalar fields, as well as the centralizer,
including discrete factors
\cite{Bourget:2015lua,Bourget:2015cza,Bourget:2015upj}.
The discrete part of the centralizers poses a hurdle -- it can be jumped.

\section*{Acknowledgments}
 We would like to acknowledge support from the grant ANR-13-BS05-0001, and from the \'Ecole Polytechnique and the \'Ecole Nationale Sup\'erieure des Mines de Paris. We are  grateful to an
 anonymous JHEP referee who has helped in improving
 our paper, and in particular suggested refining the index by classifying vacua according to their unbroken
 discrete gauge group.

\appendix

\section{\texorpdfstring{Supersymmetric Indices of Pure ${\cal N}=1$ Yang-Mills Theories}{}}
\label{indices}

In this appendix, we assemble and complete the list of the number of vacua for pure ${\cal N}=1$ super Yang-Mills theories compactified on a circle.
We follow the notation and logic of \cite{Aharony:2013hda}, to which we must refer for definitions
and details. The theories with gauge algebras $\mathfrak{b}$ and $\mathfrak{c}$
as well as $\mathfrak{d}_{N \textrm{ odd}}$ were entirely treated in \cite{Aharony:2013hda}, and we 
copied the results in Table \ref{indextable}.

The cases of Lie algebra $\mathfrak{g}_2, \mathfrak{f}_4$ and $\mathfrak{e}_8$ are trivial since the adjoint
group equals the covering group. Thus, we only need to complete the calculation for $\mathfrak{a}_{N-1}$, which is done in section \ref{pure}, and the algebras $\mathfrak{d}_{N \textrm{ even}}$, $\mathfrak{e}_6$ and $\mathfrak{e}_7$.

\begin{table}[t]
\centering
\begin{tabular}{|c|c|c|c|}
    \hline
      Algebra  & Theory  & On $\mathbb{R}^4$ & On $\mathbb{R}^3 \times S^1$ \\
        \hline
  $\mathfrak{a}_{N-1}$        & $(\mathrm{SU}(N)/\mathbb{Z}_m)_n$ & $N$ & $I^{\mathcal{N}=1}(N,m,n)$ \\
        \hline 
       \multirow{3}{*}{$\mathfrak{b}_{N \geq 2}$} & $\mathrm{Spin}(2N+1)$  &  \multirow{3}{*}{$2N-1$} & $2N-1$ \\
         & $\mathrm{SO}(2N+1)_+$  &  & $2(2N-1)$ \\
        & $\mathrm{SO}(2N+1)_-$ &   & $2N-1$ \\
        \hline 
       \multirow{3}{*}{$\mathfrak{c}_{N \geq 2}$}  & $Sp(2N)$ &   & $N+1$ \\
        & $(Sp(2N)/\mathbb{Z}_2)_+$ & $N+1$ & $\begin{cases} 2(N+1) & \textrm{ for even } N \\ \frac{3}{2}(N+1) 
       & \textrm{ for odd } N \end{cases}$ \\
        & $(Sp(2N)/\mathbb{Z}_2)_-$ &   & $\begin{cases} N+1 & \textrm{ for even } N \\ \frac{3}{2}(N+1) 
        & \textrm{ for odd } N \end{cases}$ \\
        \hline
        \multirow{3}{*}{$\mathfrak{d}_{N \geq 3}$} & $\mathrm{Spin}(2N)$ & \multirow{3}{*}{$2(N-1)$ } &  $2(N-1)$ \\
        & $\mathrm{SO}(2N)_+$ &   &  $4(N-1)$ \\
        & $\mathrm{SO}(2N)_-$ &  &  $2(N-1)$ \\
        \hline 
        $\mathfrak{d}_{N \textrm{ odd}}$ & $(\mathrm{Spin}(2N) / \mathbb{Z}_4)_n$ & \multirow{1}{*}{$2(N-1)$ } & $4(N -1)$ \\
        \hline
         \multirow{5}{*}{$\mathfrak{d}_{N \equiv 2 \, \textrm{mod} \, 4}$} & $\mathrm{Sc}(2N)_{\pm} \textrm{ and } \mathrm{Ss}(2N)_\pm$ & \multirow{5}{*}{$2(N-1)$ }  & $3(N-1)$ \\ 
        & $(\mathrm{SO}(2N)/\mathbb{Z}_2)_{\substack{ 00 \\ 00}}$ & & $5(N-1)$  \\ 
                & $(\mathrm{SO}(2N)/\mathbb{Z}_2)_{\substack{ 11 \\ 11}}$ & &  $3(N-1)$  \\ 
                & $(\mathrm{SO}(2N)/\mathbb{Z}_2)_{\substack{ 10 \\ 00}}$ & & $4(N-1)$ \\ 
             \rule[-2.2ex]{0pt}{0ex}    & $(\mathrm{SO}(2N)/\mathbb{Z}_2)_{\substack{ 01 \\ 11  }}$ & & $2(N-1)$ \\ 
                \hline 
            \multirow{6}{*}{$\mathfrak{d}_{N \equiv 0 \, \textrm{mod} \, 4}$} & $\mathrm{Sc}(2N)_+ \textrm{ and } \mathrm{Ss}(2N)_+$ & \multirow{6}{*}{$2(N-1)$ } & $4(N-1)$   \\  
                  & $\mathrm{Sc}(2N)_- \textrm{ and } \mathrm{Ss}(2N)_-$ &   & $2(N-1)$  \\ 
         & $(\mathrm{SO}(2N)/\mathbb{Z}_2)_{\substack{ 00 \\ 00}}$ & & $5(N-1)$  \\ 
                & $(\mathrm{SO}(2N)/\mathbb{Z}_2)_{\substack{ 11 \\ 11}}$ & &  $3(N-1)$  \\ 
                & $(\mathrm{SO}(2N)/\mathbb{Z}_2)_{\substack{ 01 \\ 00}}$ & & $3(N-1)$ \\ 
              \rule[-2.2ex]{0pt}{0ex}     & $(\mathrm{SO}(2N)/\mathbb{Z}_2)_{\substack{ 00 \\ 10}}$ & & $3(N-1)$ \\ 
        \hline
        \multirow{2}{*}{$\mathfrak{e}_6$} & $E_6$ & \multirow{2}{*}{$12$} & 12  \\
                       & $(E_6/\mathbb{Z}_3)_n$ & & 20  \\  
        \hline
       \multirow{2}{*}{$\mathfrak{e}_7$} & $E_7$ & \multirow{2}{*}{$18$} & 18  \\
                    & $(E_7/\mathbb{Z}_2)_n$ & & 27 \\  
        \hline            
        $\mathfrak{e}_8$ & $E_8$ & 30 & 30 \\
        \hline
        $\mathfrak{f}_4$ & $F_4$ & 9 & 9 \\
        \hline
        $\mathfrak{g}_2$ & $G_2$ & 4 & 4 \\
        \hline
\end{tabular}
    \caption{The Pure ${\cal N}=1$ Indices on a Circle. The notations are as in \cite{Aharony:2013hda}. }
    \label{indextable}
\end{table}

For the case $\mathfrak{d}_{N even}$, the hypothesis is that there is a vacuum in which purely magnetic monopoles condense. The reasoning is  standard,
based on the identification of classes of vacua transforming into each other under the global symmetry group,
and the Witten effect determining the spectrum of line operators in inequivalent vacua \cite{Aharony:2013hda}. There is a laundry list of cases to go through, and the
results are stacked in Table \ref{indextable}. We provide example calculations below.
For the exceptional groups, we divide the vacua into three groups of
$4$ for $(E_6/\mathbb{Z}_3)_n$, of which one group triples in degeneracy upon compactification, much as
for the $(\mathrm{SU}(3)/\mathbb{Z}_3)_n$ theory, and into two groups of $9$ for $(E_7/\mathbb{Z}_2)_n$, of which one group
doubles in degeneracy, similarly to what happens for $(\mathrm{SU}(2)/\mathbb{Z}_2)_n$. This leads to a total
of $3 \times 4 + 4 +4 =20$ and $2 \times 9 + 9$ vacua respectively.

To illustrate the inner workings of the added entries in Table \ref{indextable}, we provide details of their calculation
in the example of $\mathfrak{d}_{N \equiv 0 \, \textrm{mod} \, 4}$.
The algebra $\mathfrak{d}_{N \equiv 2 \, \textrm{mod} \, 4}$ can be treated similarly. 
When $N$ is even, the center of the covering group $\mathrm{Spin}(2N)$ is $\mathbb{Z}_2^S \times \mathbb{Z}_2^C$, and we will compute the index of the pure $\mathcal{N}=1$ theories with gauge group $\mathrm{Spin}(2N)/\mathbb{Z}_2^C = \mathrm{Sc}(2N)$ and $\mathrm{Spin}(2N)/(\mathbb{Z}_2^S \times \mathbb{Z}_2^C) = \mathrm{SO}(2N)/\mathbb{Z}_2$ on a circle. The $\mathrm{Ss}(2N)$ theory is 
related to the $\mathrm{Sc}(2N)$ theory by the action of the outer automorphism, and the other groups are covered in
\cite{Aharony:2013hda}.
A line operator with electric charge $(z_{e,S},z_{e,C})$ and magnetic charge $(z_{m,S},z_{m,C})$ is labelled by the element $(z_{e,S},z_{e,C};z_{m,S},z_{m,C})$ in $\mathbb{Z}_2^4$. Equivalently, we can label the line by equivalence
classes of weights $([\lambda_e] ; [\lambda_m])$ where 
\begin{equation}
[\lambda_e] , [\lambda_m] \in \{ [0] = (0,0) , [\lambda_S] = (1,0), [\lambda_C] = (0,1) , [\lambda_V] = (1,1) \} = \mathbb{Z}_2^2 \, . 
\end{equation}
From now on, we drop the square brackets. Two line operators are mutually local if and only if 
\begin{equation}
z_{e,S} z'_{m,C} - z_{m,C} z'_{e,S} \equiv z_{e,C} z'_{m,S} - z_{m,S} z'_{e,C} \quad \textrm{ mod }  2 \, , 
\end{equation}
and, combined with the fact that purely electric lines have electric charges in the weight lattice of the gauge group,
and completeness, this determines the possible line operator spectra 
\cite{Aharony:2013hda}.

\begin{table}[t]
\centering
\begin{tabular}{|c|c|c|c|c|c|c|}
\hline
Theory & Lines $\mathscr{L}$ & \multicolumn{2}{|c|}{$p (\mathscr{C}_1 , \{ \mathscr{L} \}) $} & \multicolumn{2}{|c|}{$p (\mathscr{C}_2 , \{ \mathscr{L} \}) $} & Total \\
\hline 
\multirow{3}{*}{$\mathrm{Sc}(2N)_+$} & $(\lambda_S , 0)$ & A & & A& &  \multirow{3}{*}{$4(N-1)$}\\
& $(0, \lambda_S )$ & P & 1 & A & 1& \\
& $(\lambda_S , \lambda_S )$ & A & & P & & \\
\hline 
\multirow{3}{*}{$\mathrm{Sc}(2N)_-$} & $(\lambda_S , 0)$ & A  & & A & & \multirow{3}{*}{$2(N-1)$} \\
& $(\lambda_C, \lambda_S )$ & A  & 0 &A & 0 & \\
& $(\lambda_V , \lambda_S )$ & A & & A & & \\
\hline
\multirow{3}{*}{$(\mathrm{SO}(2N)/\mathbb{Z}_2)_{ \substack{ 00 \\ 00} }$} & $(0 , \lambda_S)$ & P & & A & & \multirow{3}{*}{$5(N-1)$} \\
& $(0, \lambda_C )$ &P & 2 & A & 0& \\
& $(0, \lambda_V )$ &P &  & A & & \\
\hline
\multirow{3}{*}{$(\mathrm{SO}(2N)/\mathbb{Z}_2)_{ \substack{11 \\ 11} }$} & $(\lambda_V , \lambda_S)$ & A & & A & & \multirow{3}{*}{ $3(N-1)$}\\
& $(\lambda_V, \lambda_C )$ & A & 1 & A & 0& \\
& $(0, \lambda_V )$ &P &  & A & & \\
\hline
\multirow{3}{*}{$(\mathrm{SO}(2N)/\mathbb{Z}_2)_{ \substack{01 \\ 00} }$} & $(\lambda_C , \lambda_S)$ &A  & & A &  & \multirow{3}{*}{$3(N-1)$}\\
& $(0, \lambda_C )$ &P & 1 & A & 0& \\
& $(\lambda_C, \lambda_V )$ &A &  & A & & \\
\hline
\multirow{3}{*}{$(\mathrm{SO}(2N)/\mathbb{Z}_2)_{ \substack{00 \\ 10} }$} & $(0 , \lambda_S)$ &P &  & A & & \multirow{3}{*}{$3(N-1)$}\\
& $(\lambda_S, \lambda_C )$ &A  & 1 & A & 0& \\
& $(\lambda_S, \lambda_V )$ &A &  & A & & \\
    \hline
\end{tabular}
\caption{The computation of the number of vacua in the compactified $\mathfrak{d}_{N \equiv 0 \, \textrm{mod} \, 4}$ theories. The letters A and P indicate that the line follows an area or perimeter law, respectively. }
\label{tableComputationVac}
\end{table}
We work under the assumption that there exists a vacuum where all the particles that condense are purely magnetic. The $2N-2$ vacua of the $\mathfrak{so}(2N)$ theories on $\mathbb{R}^4$ are divided into two classes. We have $N-1$ vacua which are related by the global discrete
symmetry group to a vacuum where particles with purely magnetic charges in the set $\mathscr{C}_1 = \{ (0 ; \alpha_i) \}_{i=1 , \dots , N}$ 
condense, where $\alpha_i$ generate a maximal rank sublattice of the root lattice. 
There are also $N-1$ vacua related to a vacuum where particles with equal electric and magnetic charges $\mathscr{C}_2= \{ (\alpha_i ; \alpha_i) \}_{i=1 , \dots , N}$ condense.
We will suppose that the  charges $\alpha_i$ correspond to simple roots of the Lie algebra. These hypotheses are
supported by the physics of ${\cal N}=2$ supersymmetric gauge theories upon breaking to ${\cal N}=1$ by mass deformation.

Now, a line operator of charge $\mathscr{L} = (\lambda_e ; \lambda_m)$ in a vacuum with condensates of charges $\mathscr{C} = \{ (\alpha_{e,i} ; \alpha_{m,i}) \}_{i=1 , \dots , N}$ on $\mathbb{R}^4$ indicates an unbroken magnetic gauge symmetry if and only if 
\begin{equation}
 \{ \lambda_e \cdot \alpha_{m,i} - \lambda_m \cdot \alpha_{e,i} \}_{i=1 , \dots , N}  = \{ 0 , \dots , 0 \}  \quad \textrm{ mod } 2  \, . 
\end{equation}
When this is the case, the line follows a perimeter law. 
In a theory with a given spectrum $\{ \mathscr{L} \}$ of line operators, the unbroken magnetic gauge group is therefore $\mathbb{Z}_2^{p (\mathscr{C} , \{ \mathscr{L} \})}$ where $p (\mathscr{C} , \{ \mathscr{L} \})$ is the rank of the lattice of charges of line operators that follow a perimeter law. 
Upon compactification, such a vacuum leads to a degeneracy equal to the cardinal number of this unbroken discrete group.
Therefore the number of vacua on $\mathbb{R}^3 \times S^1$ is 
\begin{equation}
\frac{2N-2}{2} \sum\limits_{k=1}^2 2^{p (\mathscr{C}_k , \{ \mathscr{L} \})}   \, , 
\end{equation}
where the  sum runs on the two vacua related by the transformation $\theta \rightarrow \theta+2 \pi$. The calculation is described explicitly in Table \ref{tableComputationVac}.

\section{The Details and The Arithmetic}
\label{numbers}
This appendix contains the inner workings of various technical reasonings in the bulk of the paper.
In appendix \ref{isomorphism}, we present a more detailed analysis of the unbroken gauge group argued for 
in subsection \ref{semiclassical}.
In subsections \ref{AppendixBasicSDuality}, \ref{AppendixProofGCD} and \ref{AppendixLemmas}, we provide proofs of 
useful arithmetic identities.

\subsection{The Unbroken Gauge Group}
\label{isomorphism}
Let $d$ and $m$ be divisors of $N$. In this subsection we compute in more technical detail the residual gauge group 
\begin{eqnarray}
   H_{\mathbb{C}} = S \left(  \mathrm{GL}(d)^{N/d}_\Delta \right) \cap G_{\mathbb{C}} \, . 
\end{eqnarray}
with $G_{\mathbb{C}} = \mathrm{SL}(N,\mathbb{C})/\mathbb{Z}_m$. We first note that 
\begin{equation}
    H_{\mathbb{C}} = \left[ S \left(  \mathrm{GL}(d)^{N/d}_\Delta \right) \right] / \mathbb{Z}_m \, . 
\end{equation}
Consider the map $\psi$ defined by 
\begin{eqnarray}
   \psi &:& \mathrm{SL}(d) \times \mathbb{Z}_N \rightarrow  H_{\mathbb{C}}  \\
    & & (A,l) \mapsto e^{2 \pi i l/N} \{ 1 , e^{2 \pi i /m} , \cdots , e^{2 \pi i (m-1)/m} \} \, \textrm{diag} (A , \cdots , A) \nonumber \, ,
\end{eqnarray}
where we map into elements of the set $H_{\mathbb{C}}$, which are equivalence classes.
Let us note various properties of this group homomorphism : 
\begin{itemize}
    \item It is surjective. Indeed, consider a representative $\textrm{diag} (B , \cdots , B)$ of a $\mathbb{Z}_m$ equivalence class in $ S \left(  (\mathrm{GL}(d))^{N/d}_\Delta \right)$. We have $\det B^{N/d} = 1$, so there exists $l \in \mathbb{Z}$ such that $\det B = e^{2 \pi i d l/N}$. Define $A =  e^{- 2 \pi i l/N} B$. Then $\det A = e^{- 2 \pi i d l/N} \det B = 1$. 
    \item We have $\textrm{Ker} \, \psi \cong \mathbb{Z}_{m} \times \mathbb{Z}_d$. This isomorphism is given explicitly by
    \begin{eqnarray}
    \label{actionKernel}
    \mathbb{Z}_m \times \mathbb{Z}_d & \longrightarrow & \textrm{Ker} \, \psi \subset \mathrm{SL}(d) \times \mathbb{Z}_N \nonumber \\
         (s,s') & \mapsto & \left( e^{-2\pi i \frac{s'}{d}} 1_{d\times d}, N \left(\frac{s}{m} + \frac{s'}{d} \right) \right) \, . 
    \end{eqnarray}
\end{itemize}
By the isomorphism theorem, we have 
\begin{equation}
     H_{\mathbb{C}}  \cong \frac{\mathrm{SL}(d) \times \mathbb{Z}_N}{\mathbb{Z}_m \times \mathbb{Z}_d}  \, . 
\end{equation}
We know how $\mathbb{Z}_m \times \mathbb{Z}_d$ acts on $\mathrm{SL}(d) \times \mathbb{Z}_N$ thanks to equation (\ref{actionKernel}):
\begin{equation}
(A,l) \sim \left( e^{- 2 \pi i/d } A , l+ \frac{N}{d} \right) \sim \left(A , l+ \frac{N}{m} \right) \, . 
\end{equation}
In particular, considering the least common multiple of $\frac{N}{d}$ and $\frac{N}{m}$, we can cancel the action on $l$ and obtain 
\begin{equation}
\label{actiongcd}
(A,l) \sim \left( e^{2 \pi i/ \textrm{gcd}(d,m) } A , l \right) \, , 
\end{equation}
and considering two integers $s,s'$ such that $s \frac{N}{m} + s' \frac{N}{d} = \textrm{gcd} \left( \frac{N}{m} , \frac{N}{d} \right) = \frac{N}{\textrm{lcm}(d,m)}$ gives 
\begin{equation}
\label{actionNlcm}
(A,l) \sim \left( e^{- 2 \pi i s' / d } A , l + \frac{N}{\textrm{lcm}(d,m)} \right) \, . 
\end{equation}
A  consequence of equations (\ref{actiongcd}) and (\ref{actionNlcm}) is that we have the quotient group
\begin{equation}
\frac{H_{\mathbb{C}}}{\left( \frac{\mathrm{SL}(d)}{\mathbb{Z}_{ \textrm{gcd}(d,m)}} \right)} \cong \mathbb{Z}_{\frac{N}{\textrm{lcm}(d,m)}} \, . 
\end{equation}

\subsection{\texorpdfstring{A Warm-up for  $S$-Duality}{}}
\label{AppendixBasicSDuality}

As a consequence of $S$-duality, the indices (\ref{indexSUN}) and (\ref{indexSUNZm}) of the $\mathcal{N}=1^\ast$ theories $\mathrm{SU}(N)$ and $(\mathrm{SU}(N)/\mathbb{Z}_N)_0$  have to be equal. To prove the equality, we exploit Euler's totient function $\varphi (n)$ which counts the number of integers $1 \leq k \leq n$ that are coprime to $n$. For $d$ a divisor of $n$, we have 
\begin{eqnarray}
    \varphi \left( \frac{n}{d} \right)     &=& \left| \left\{ 1 \leq k \leq n  \mid  \textrm{gcd} \left( k , n \right) = d \right\} \right| \, . 
\end{eqnarray}
We then compute 
\begin{equation}
\label{sumGauss}
    \sum\limits_{d | n} \varphi (d) =   \sum\limits_{d | n} \varphi \left( \frac{n}{d} \right) =  \sum\limits_{d | n} \sum\limits_{k=1}^n \delta_{\textrm{gcd} \left( k , n \right) = d} =  \sum\limits_{k=1}^n  \sum\limits_{d | n} \delta_{\textrm{gcd} \left( k , n \right) = d} = \sum\limits_{k=1}^n 1 = n \, , 
\end{equation}
and similarly 
\begin{equation}
\label{sumGCD}
    \sum\limits_{d | n} d \varphi \left( \frac{n}{d} \right) =  \sum\limits_{d | n} \sum\limits_{k=1}^n d \delta_{\textrm{gcd} \left( k , n \right) = d} =  \sum\limits_{k=1}^n  \sum\limits_{d | n} d  \delta_{\textrm{gcd} \left( k , n \right) = d} =  \sum\limits_{k=1}^n  \textrm{gcd} \left( k , n \right) \, .
\end{equation}
Now we can combine the two equalities to prove the desired identity: 
\begin{eqnarray}
     \sum\limits_{n | N} \sum\limits_{k=1}^n  \textrm{gcd} \left( k , n \right) &=&  \sum\limits_{n | N}  \sum\limits_{d | n} d \varphi \left( \frac{n}{d} \right) = \sum\limits_{d | N} d \sum\limits_{d | n | N} \varphi \left( \frac{n}{d} \right) \nonumber \\
     &=& \sum\limits_{d | N} d \sum\limits_{ n' | \frac{N}{d}} \varphi \left( n' \right) =  \sum\limits_{d | N} d \frac{N}{d} = N \sigma_0 (N) \, . 
\end{eqnarray}
This finishes our warm-up example for the proof of $S$-duality in the generic case in subsections \ref{genericproof} and \ref{duality}.

\subsection{A Sum of Greatest Common Divisors}
\label{AppendixProofGCD}

This subsection is devoted to the derivation of the formula (\ref{sumGCDaffine}). This is a particular case of the identity \cite{CS}
\begin{equation}
\label{identityChidambaraswamy}
    \sum\limits_{s=1}^n \textrm{gcd} \left( n , f(s) \right) = n \sum\limits_{d | n} \frac{\varphi (d)}{d} \rho (d,f)  \, , 
\end{equation}
where $f$ is any polynomial with integer coefficients, and $\rho (d,f)$ is the number of solutions of the equation $f(s)=0$ in $\mathbb{Z}_d$. Since we only need a small subset of the identities proven in \cite{CS}, we limit the proof of \cite{CS} to our simpler context. The essential ingredients are the M\"obius function identity and the exchange of two sums over divisors:
\begin{eqnarray}
\sum\limits_{s=1}^n \textrm{gcd} \left( n , f(s) \right) &=& \sum\limits_{d|n} d \sum\limits_{s=1}^n \delta \left[ \textrm{gcd} (f(s) , n) = d \right]\nonumber \\ 
&=& \sum\limits_{d|n} d \sum\limits_{s=1}^n \sum\limits_{e | \textrm{gcd} (\frac{f(s)}{d} , \frac{n}{d}) } \mu (e) \label{step1}  \\ 
&=& \sum\limits_{d|n} d \sum\limits_{e | \frac{n}{d}} \sum\limits_{s=1}^n  \mu (e) \delta \left[ de | f(s) \right]\nonumber  \\ 
&=& \sum\limits_{d|n} d \sum\limits_{e | \frac{n}{d}}  \mu (e) \rho (de ,f) \frac{n}{de} \label{step2} \nonumber \\ 
&=& \sum\limits_{d|n} d \sum\limits_{d|u|n}  \mu \left(\frac{u}{d} \right) \rho (u ,f) \frac{n}{u} \nonumber \\ 
&=& \sum\limits_{u|n} \rho (u ,f) \frac{n}{u} \sum\limits_{d|u} d   \mu \left(\frac{u}{d} \right)  \nonumber  \\ 
&=& \sum\limits_{u|n} \rho (u ,f) \frac{n}{u} \varphi (u) \label{step3} \, . 
\end{eqnarray}
We use the M\"obius function identity in steps (\ref{step1}) and (\ref{step3}) respectively. To recuperate the formula (\ref{sumGCDaffine}), we analyze the number of solutions $s$ to the Diophantine equation $as+b=0$ modulo $n$. If the equation has a solution, then there exist integers $s$ and $r$ such that $as+b=rn$, and therefore $\textrm{gcd}(a,n)|b$. If this condition is not fulfilled, the equation has no solution. If it is, then we have  $\textrm{gcd}(a,n)$ solutions. Thus, formula (\ref{identityChidambaraswamy}) implies the desired formula (\ref{sumGCDaffine}).

\subsection{Auxiliary Lemmas}
\label{AppendixLemmas}

\begin{lemma}
\label{lemma1}
Let $v$ and $d'$ be two integers such that $\mathrm{gcd} (v,d')=1$ and let $\delta '$ be any integer. Then there exists $k \in \mathbb{Z}$ such that $\mathrm{gcd} (\delta ' , v-d' k) = 1$. 
\end{lemma}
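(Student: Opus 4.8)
The plan is to show that $k$ can be chosen so that $v-d'k$ shares no prime factor with $\delta'$, handling the relevant primes according to whether or not they divide $d'$. Since $\gcd(\delta', v-d'k)=1$ holds precisely when no prime dividing $\delta'$ divides $v-d'k$, it suffices to control $v-d'k$ modulo the radical $\textrm{rad}(\delta')$ of $\delta'$ (the product of its distinct prime divisors). First I would factor $\textrm{rad}(\delta') = r_1 r_2$, where $r_1$ is the product of the primes dividing both $\delta'$ and $d'$, and $r_2$ is the product of the primes dividing $\delta'$ but not $d'$; these two factors are coprime.

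The key observation, and the step I expect to be the main obstacle, concerns the primes collected in $r_1$. For any prime $p \mid r_1$ we have $p \mid d'$, and since $\gcd(v,d')=1$ this forces $p \nmid v$. Consequently $v - d'k \equiv v \not\equiv 0 \pmod{p}$ for \emph{every} $k$, so $\gcd(v-d'k, r_1)=1$ automatically, independently of the choice of $k$. This is exactly where the hypothesis $\gcd(v,d')=1$ does its work: without it, a prime dividing both $\delta'$ and $d'$ could also divide $v$, hence divide $v - d'k$ for all $k$, and no good $k$ would exist.

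It then remains to handle $r_2$. Since $\gcd(d', r_2)=1$, the integer $d'$ is invertible modulo $r_2$, so I can solve the linear congruence $d'k \equiv v - 1 \pmod{r_2}$; that is, I pick $k_0$ with $v - d'k_0 \equiv 1 \pmod{r_2}$, which yields $\gcd(v-d'k_0, r_2)=1$. Finally I would take any integer $k \equiv k_0 \pmod{r_2}$ (no condition modulo $r_1$ is needed, by the previous paragraph). For such $k$ we get $\gcd(v-d'k, r_1)=1$ and $\gcd(v-d'k, r_2)=1$, hence $\gcd(v-d'k, \textrm{rad}(\delta'))=1$, and therefore $\gcd(\delta', v-d'k)=1$, as required. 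Equivalently, one can phrase the selection of $k$ as a prime-by-prime avoidance argument — for each prime $p\mid\delta'$ with $p\nmid d'$ there is a unique forbidden residue of $k$ modulo $p$, and since $p\geq 2$ a permissible residue always remains — combined with the Chinese Remainder Theorem; this amounts to the same construction.
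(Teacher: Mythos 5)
Your proof is correct, and it takes a genuinely different route from the paper's. The paper makes a single explicit multiplicative choice: it sets $k = \prod_{p \in P} p$, where $P$ is the set of primes dividing $\delta'$ but dividing neither $v$ nor $d'$, and then runs a three-way case analysis (primes of $\delta'$ dividing $v$, dividing $d'$, or in $P$) to check that no prime of $\delta'$ can divide $v - d'k$; no congruence is ever solved. You instead split the primes of $\delta'$ into only two classes according to whether they divide $d'$: those that do are harmless for \emph{every} $k$ (this is the same crucial use of $\mathrm{gcd}(v,d')=1$ as in the paper), and for the rest you invert $d'$ modulo $r_2$ and force $v - d'k \equiv 1 \pmod{r_2}$. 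Your argument buys a slightly stronger statement — you can prescribe the residue of $v-d'k$ modulo the part of $\delta'$ coprime to $d'$, not merely make it a unit — and a cleaner two-case structure; the paper's argument buys a fully explicit formula for $k$ that needs no modular inversion or Chinese Remainder Theorem, only the definition of $P$ and divisibility. One peripheral remark applying equally to both proofs: the statement's ``any integer $\delta'$'' should be read as $\delta' \neq 0$ (as it is in the application, where $\delta' = \delta/\mathrm{gcd}(\delta, nm')$ with $\delta \geq 1$), since for $\delta' = 0$ the conclusion $\lvert v - d'k \rvert = 1$ can fail, e.g.\ for $v=2$, $d'=5$.
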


\begin{proof}
Let $P$ be the set of prime numbers that divide $\delta '$ but that do not divide $v$ nor $d'$. We now define $ k = \prod\limits_{p \in P} p$. Let $p$ be a prime that divides $\delta '$. There are three possibilities : 
\begin{itemize}
    \item If $p$ divides $v$, then $p \notin P$ so $p$ doesn't divide $k$. It doesn't divide $d'$ either because $\mathrm{gcd} (v,d')=1$. Therefore $p$ does not divide $v-d'k$, and does not divide $\mathrm{gcd} (\delta ' , v-d' k)$. 
    \item If $p$ divides $d'$, then it does not divide $v$, nor $v-d'k$, nor $\mathrm{gcd} (\delta ' , v-d' k)$. 
    \item The last possibility is that $p \in P$. But then $p$ does not divide $v$, and does not divide $\mathrm{gcd} (\delta ' , v-d' k)$. 
\end{itemize}
We see that no prime can divide $\mathrm{gcd} (\delta ' , v-d' k)$, so this number must be equal to one. 
\end{proof}

\begin{lemma}
\label{lemma2} 
Let $\delta \in \mathbb{N}^\ast$, let $d'$ and $m'$ be integers such that $(d',m')=1$ and let $n \in \mathbb{Z}$. Then there exists $a \in \mathbb{Z}$ such that 
\begin{equation}
   \mathrm{gcd} (\delta , ad'+n) = \mathrm{gcd} (\delta , n m') \, . 
\end{equation}
\end{lemma}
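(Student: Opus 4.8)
The plan is to split $\delta$ according to which primes divide $d'$, exploiting $\mathrm{gcd}(d',m')=1$ on one piece and invertibility of $d'$ on the other. Write $\delta = \delta_1 \delta_2$, where $\delta_2$ collects those prime-power factors $p^{v_p(\delta)}$ of $\delta$ for which $p \mid d'$, and $\delta_1$ collects the remaining ones. Then $\mathrm{gcd}(\delta_1,\delta_2)=1$ and $\mathrm{gcd}(\delta_1,d')=1$, while every prime dividing $\delta_2$ divides $d'$; since $\mathrm{gcd}(d',m')=1$, no prime of $\delta_2$ divides $m'$, so $\mathrm{gcd}(\delta_2,m')=1$. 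Because $\mathrm{gcd}(\delta_1,\delta_2)=1$, the greatest common divisor is multiplicative across the two factors, i.e. $\mathrm{gcd}(\delta,X)=\mathrm{gcd}(\delta_1,X)\,\mathrm{gcd}(\delta_2,X)$ for every integer $X$. It therefore suffices to pin down the gcd of $ad'+n$ against $\delta_1$ and against $\delta_2$ separately, and then to merge the resulting congruence conditions on $a$ by the Chinese remainder theorem.

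On the $\delta_2$ factor I would simply impose $a \equiv 0 \pmod{\delta_2}$. Then $\delta_2 \mid a d'$, so $a d' + n \equiv n \pmod{\delta_2}$ and $\mathrm{gcd}(\delta_2, a d' + n) = \mathrm{gcd}(\delta_2, n)$. Since $\mathrm{gcd}(\delta_2,m')=1$ we also have $\mathrm{gcd}(\delta_2, n m') = \mathrm{gcd}(\delta_2, n)$, so the two sides already coincide on this factor for any $a$ in the class $0 \bmod \delta_2$. On the $\delta_1$ factor, $d'$ is a unit modulo $\delta_1$. Setting $g_1 = \mathrm{gcd}(\delta_1, n m')$, I would solve the linear congruence $a d' \equiv g_1 - n \pmod{\delta_1}$, which is possible precisely because $d'$ is invertible; this forces $a d' + n \equiv g_1 \pmod{\delta_1}$. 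As $g_1 \mid \delta_1$, it follows that $\mathrm{gcd}(\delta_1, a d' + n) = \mathrm{gcd}(\delta_1, g_1) = g_1 = \mathrm{gcd}(\delta_1, n m')$.

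Finally, I would combine the conditions $a \equiv 0 \pmod{\delta_2}$ and $a \equiv (d')^{-1}(g_1 - n) \pmod{\delta_1}$ into a single residue class modulo $\delta = \delta_1\delta_2$ via the Chinese remainder theorem. Multiplicativity of the gcd then yields $\mathrm{gcd}(\delta, a d' + n) = \mathrm{gcd}(\delta_1, a d' + n)\,\mathrm{gcd}(\delta_2, a d' + n) = \mathrm{gcd}(\delta_1, n m')\,\mathrm{gcd}(\delta_2, n m') = \mathrm{gcd}(\delta, n m')$, as required.

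I do not expect a genuinely hard analytic step here; the only real care is the prime bookkeeping in the factorization $\delta = \delta_1\delta_2$ (so that $\mathrm{gcd}(\delta_2,m')=1$ truly holds, which is where $\mathrm{gcd}(d',m')=1$ enters) and the repeated use of multiplicativity of the gcd across coprime moduli. In particular, Lemma \ref{lemma1} is not strictly needed for this route: it could instead be invoked on the $\delta_1$ factor to produce an element with the exact gcd by shifting through multiples of $d'$, but the explicit congruence above achieves the same end more directly.
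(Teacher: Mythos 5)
Your proof is correct, and it takes a genuinely different route from the paper's. The paper proceeds via a Bézout relation $ud'+vm'=1$, invokes Lemma \ref{lemma1} to replace $(u,v)$ by $(u+m'k,\,v-d'k)$ so that $\mathrm{gcd}(\delta',v-d'k)=1$ where $\delta'=\delta/\mathrm{gcd}(\delta,nm')$, and then sets $a=-n(u+m'k)$, which gives the identity $ad'+n=nm'(v-d'k)$ and hence the claim by peeling off the factor coprime to $\delta'$. You instead factor the modulus, $\delta=\delta_1\delta_2$ with $\delta_2$ the part of $\delta$ supported on primes dividing $d'$, observe $\mathrm{gcd}(\delta_1,d')=1$ and, via the hypothesis $\mathrm{gcd}(d',m')=1$, that $\mathrm{gcd}(\delta_2,m')=1$, then impose $a\equiv 0 \pmod{\delta_2}$ and $ad'\equiv \mathrm{gcd}(\delta_1,nm')-n \pmod{\delta_1}$, and glue by the Chinese remainder theorem together with multiplicativity of $\mathrm{gcd}(\delta,\cdot)$ over the coprime factors. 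All steps check, including the degenerate cases ($n=0$, or $\delta_2=\delta$ when every prime of $\delta$ divides $d'$), and your remark that Lemma \ref{lemma1} becomes unnecessary is accurate: the prime bookkeeping the paper delegates to that lemma is done once and for all in your splitting of $\delta$. The trade-off is that your argument is modular and makes visible exactly where coprimality of $d'$ and $m'$ enters, whereas the paper's construction yields an explicit closed-form $a$ from Bézout data and keeps the body of the proof to gcd manipulations, at the cost of the auxiliary lemma.
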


\begin{proof}
We use Bézout's theorem to introduce $(u,v) \in \mathbb{Z}^2$ such that $u d' + v m' = 1$. We also define $\delta ' = \frac{\delta}{(\delta , nm')}$. Note that Bézout's theorem asserts that $\mathrm{gcd} (v,d')=1$, so we can apply lemma \ref{lemma1}. Let $k$ be the integer given by this lemma and write 
\begin{equation}
    1 = (u+m'k)d' + (v-d'k)m' \, . 
\end{equation}
Then if we set $a = -n (u+m'k)$, we have 
\begin{eqnarray}
       \mathrm{gcd} \left(\delta , ad' + n \right) &=&   \mathrm{gcd}  \left(\delta , ad' + n(u+m'k)d' + (v-d'k)nm'  \right)  \nonumber \\
        &=&   \mathrm{gcd}  \left(\delta , (v-d'k)nm'  \right) \nonumber \\
        &=&  \mathrm{gcd}  \left(\delta , nm'  \right) \mathrm{gcd} \left(\delta ' , v-d'k \right) \nonumber \\
        &=& \mathrm{gcd} \left(\delta , nm'  \right) \, . 
\end{eqnarray}
\end{proof}

\begin{lemma}
\label{lemma3} 
Let $\delta \in \mathbb{N}^\ast$ and $d \in \mathbb{N}^\ast$. For $a \in \mathbb{Z}_d$, define the \emph{cyclic ordered sequence}
\begin{equation}
\label{definitionDa}
    D_a(\delta ,d) = \left[ \mathrm{gcd}(\delta , a+dt) \right]_{ t \in \mathbb{Z}_{\delta } } \, . 
\end{equation}
If the sequences $D_a (\delta ,d)$ and $ D_b (\delta ,d)$ have an element in common, then they are equal as cyclic ordered sequences. 
\end{lemma}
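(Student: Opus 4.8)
The plan is to reduce the statement to an assertion about how multiplication by a unit of $\mathbb{Z}/\delta$ acts on these gcd-sequences, and then to settle that assertion prime by prime.

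First I would record the basic equivariance $D_a(t+s) = \gcd(\delta, a + d(t+s)) = \gcd(\delta, (a+ds) + dt)$, so that rotating $D_a$ by $s$ positions produces $D_{a+ds}$. (This is also why $D_a$ is well defined for $a \in \mathbb{Z}_d$.) Using this, if the two sequences share a value, say $D_a(t_1) = D_b(t_2)$, I rotate each sequence so as to bring that common value to the index $t=0$; concretely I replace $a$ by $a + dt_1$ and $b$ by $b + dt_2$, which alters neither cyclic sequence. Hence it suffices to treat the case $\gcd(\delta, a) = \gcd(\delta, b) =: c$ and to prove that then $D_a = D_b$ as cyclic sequences.

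Next I would reformulate this using units. Since $a$ and $b$ have the same gcd with $\delta$, they generate the same cyclic subgroup of $\mathbb{Z}/\delta$ and have the same order, so $b \equiv u a \pmod{\delta}$ for some unit $u$. Because $\gcd(\delta, ux) = \gcd(\delta, x)$ for every unit $u$, I compute
\begin{equation}
D_b(t) = \gcd(\delta, ua + dt) = \gcd\big(\delta, a + d(u^{-1}t)\big) = D_a(u^{-1}t) \, .
\end{equation}
Thus the lemma becomes the statement that the index automorphism $t \mapsto u^{-1}t$ of $\mathbb{Z}_\delta$ turns $D_a$ into a \emph{rotation} of itself, i.e. that there is a single $s$ with $D_a(vt) = D_a(t+s)$ for all $t$, where $v=u^{-1}$.

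The hard part is exactly this last step — exhibiting one coherent shift $s$ — and I would attack it prime by prime via the Chinese remainder theorem. Writing $\delta = \prod_p p^{\alpha_p}$, one has $D_a(t) = \prod_p p^{\min(\alpha_p, \nu_p(a+dt))}$ with $\nu_p$ the $p$-adic valuation, and the $p$-th exponent depends only on the $\mathbb{Z}_{p^{\alpha_p}}$-component of $t$. It therefore suffices to produce, for each prime power $q=p^{\alpha}$, a shift $s_p \in \mathbb{Z}_q$ realizing multiplication by $v$ as a translation of the local profile $t \mapsto \min(\alpha, \nu_p(a+dt))$, and then to assemble $s=(s_p)_p$ by the Chinese remainder theorem. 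For the local analysis I would split on $\kappa := \nu_p(d)$: if $\nu_p(a) < \kappa$ or $\kappa \geq \alpha$ the profile is constant and any $s_p$ works; otherwise $p^\kappa \mid a$, and factoring out $p^\kappa$ turns the step $d$ into a unit modulo $p^{\alpha-\kappa}$, so after this unit reindexing the argument runs through all residues and one sets $s_p$ to absorb the difference $v^{-1}a - a$, which makes the two valuation profiles agree. The one structural point underpinning everything is that each level set $\{t : e \mid D_a(t)\}$ (for $e \mid \delta$) is a coset of a subgroup of $\mathbb{Z}_\delta$, so multiplication by a unit permutes these cosets without distorting their shape; this is what guarantees that the per-prime shifts glue into a genuine rotation rather than a more general permutation of the sequence.
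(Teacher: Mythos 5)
Your proposal is correct, but it takes a genuinely different route from the paper's proof. The paper first replaces $d$ by $\gcd(\delta,d)$ so as to assume $d\mid\delta$, and then argues by strong induction on $\delta$: the shared element forces $\gcd(d,a)=\gcd(d,b)=:a_1$, this factor is pulled out of every term of both sequences, and the statement reduces to the smaller instance $\mathcal{P}(\delta/d,\,d/a_1)$. You instead argue globally and non-inductively: rotating both sequences so the common value sits at $t=0$ gives $\gcd(\delta,a)=\gcd(\delta,b)$; writing $b\equiv ua\pmod{\delta}$ with $u$ a unit yields $D_b(t)=D_a(u^{-1}t)$; and the crux becomes showing that the multiplicative reindexing $t\mapsto vt$ (with $v=u^{-1}$) acts on a gcd-profile as a rotation, which you establish prime by prime and glue by the Chinese remainder theorem. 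Your local analysis does go through: with $\kappa=\nu_p(d)$ and $\alpha=\nu_p(\delta)$, the profile $f_p(t)=\min\left(\alpha,\nu_p(a+dt)\right)$ is constant unless $\kappa<\alpha$ and $p^{\kappa}\mid a$, in which case, writing $a=p^{\kappa}a_0$ and $d=p^{\kappa}d_0$ with $p\nmid d_0$, the shift $s_p\equiv d_0^{-1}(v^{-1}-1)a_0\pmod{p^{\alpha-\kappa}}$ gives $f_p(vt)=f_p(t+s_p)$ identically, because $a_0+d_0vt=v(v^{-1}a_0+d_0t)$ and multiplication by a unit does not change valuations; the gluing is legitimate since each $f_p$ depends on $t$ only modulo $p^{\alpha}$. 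One step deserves to be made explicit rather than asserted: the claim that $\gcd(\delta,a)=\gcd(\delta,b)=c$ implies $b\equiv ua\pmod{\delta}$ for a \emph{unit} $u$ of $\mathbb{Z}_\delta$ rests on the surjectivity of the reduction map $(\mathbb{Z}/\delta\mathbb{Z})^{\ast}\to(\mathbb{Z}/(\delta/c)\mathbb{Z})^{\ast}$, i.e.\ on lifting a unit modulo $\delta/c$ to a unit modulo $\delta$; this is precisely the content of the paper's Lemma \ref{lemma1}, so you should prove or cite it at that point. In exchange for invoking CRT and this lifting fact, your approach exposes the structural mechanism behind the lemma (automorphisms of $\mathbb{Z}_\delta$ act on these gcd-sequences by rotations) and produces an explicit rotation shift, whereas the paper's induction stays entirely within elementary gcd manipulations but leaves that mechanism hidden.
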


\begin{proof}
Let us call $\mathcal{P}(\delta,d)$ the property that we want to prove. 
Let us first show that it is sufficient to prove $\mathcal{P}(\delta,d)$ for $d$ a divisor of $\delta$. The additive group $d \mathbb{Z} / \delta \mathbb{Z}$ is isomorphic to the group $d_1 \mathbb{Z} / \delta \mathbb{Z}$ where $d_1=\mathrm{gcd} (\delta,d)$. This group isomorphism gives rise to a reshuffling of numbers inside $\mathbb{Z}_\delta$, determined by $d$ and $d_1$. The fact that the reshuffling is common between the two sequences implies that we can prove the lemma for $\mathrm{gcd} (\delta,\cdot)$ applied to the sets $\{a+ d_1 t | t \in \mathbb{Z} \}$ and $\{ b + d_1 t | t \in \mathbb{Z} \}$. 

Now we prove by induction on $\delta$ that $\forall d | \delta , \, \mathcal{P}(\delta,d) $. This is obvious for $\delta =1$. Consider then $\delta > 1$. The property $\mathcal{P}(\delta,1)$ is clear, so we choose a divisor $d>1$, and define $k= \delta / d < \delta$. Let $a,b \in \mathbb{Z}_d$ such that $D_a(\delta,d)$ and $D_b(\delta,d)$ have an element in common. With the notation $a_1 =\mathrm{gcd}(\delta,d,a)=\mathrm{gcd}(d,a)$, we have that 
\begin{equation}
\label{gcd1}
\mathrm{gcd}(\delta,a+dt) = a_1 \mathrm{gcd}\left(k \frac{d}{a_1} , \frac{a}{a_1} +\frac{d}{a_1} t \right) = a_1 \mathrm{gcd} \left(k,\frac{a}{a_1} +\frac{d}{a_1} t \right) \, . 
\end{equation}
We used the fact that $a/a_1$ is coprime with $d/a_1$ in the last equality. 

But we know that there exist $t,t'$ such that $(kd , a+dt) = (kd , b+dt')$. Then any divisor of $d$ and $a$ is a divisor of $(kd , a+dt) = (kd , b+dt')$ and therefore also a divisor of $b$. This proves that $a_1 | b_1=\mathrm{gcd}(d,b)$. Similarly, $b_1 | a_1$. Thus, $a_1=b_1$, and we deduce that 
\begin{equation}
\label{gcd2}
\mathrm{gcd}(\delta,b+dt) =  b_1 \mathrm{gcd} \left(k,\frac{b}{b_1} +\frac{d}{b_1} t \right) =  a_1 \mathrm{gcd} \left(k,\frac{b}{a_1} +\frac{d}{a_1} t \right) \, . 
\end{equation}
The sequences in the right-hand sides of (\ref{gcd1}) and (\ref{gcd2}) have an element in common, and using $\mathcal{P}(k,d/a_1)$, they are equal as cyclic sequences. Hence the same holds for the left-hand sides, proving the property $\mathcal{P}(\delta,d)$ and the lemma. 
\end{proof}

\bibliographystyle{JHEP}

\end{document}